\newtheorem{theorem}{Theorem}
\title{Modeling Dynamic Functional Connectivity with Latent Factor Gaussian Processes}
\author{
  Lingge Li$^*$\\
  UC Irvine\\
  \texttt{linggel@uci.edu}
  \And
  Dustin Pluta$^*$\\
  UC Irvine\\
  \texttt{dpluta@uci.edu}
  \And
  Babak Shahbaba\\
  UC Irvine\\
  \texttt{babaks@uci.edu}
  \And
  Norbert Fortin\\
  UC Irvine\\
  \texttt{norbert.fortin@uci.edu}
  \And
  Hernando Ombao\\
  KAUST\\
  \texttt{hernando.ombao@kaust.edu.sa}
  \And
  Pierre Baldi\\
  UC Irvine\\
  \texttt{pfbaldi@ics.uci.edu}
\footnotetext[1]{The author contribute  equally to this paper.}}
\begin{document}

\maketitle

\begin{abstract}
Dynamic functional connectivity, as measured by the time-varying covariance of neurological signals, is believed to play an important role in many aspects of cognition.  While many methods have been proposed, reliably establishing the presence and characteristics of brain connectivity is challenging due to the high dimensionality and noisiness of neuroimaging data. 
We present a latent factor Gaussian process model which addresses these challenges by learning a parsimonious representation of connectivity dynamics. The proposed model naturally allows for inference and visualization of connectivity dynamics.  As an illustration of the scientific utility of the model, application to a data set of rat local field potential activity recorded during a complex non-spatial memory task provides evidence of stimuli differentiation.
\end{abstract}

\section{Introduction}

The celebrated discoveries of place cells, grid cells, and similar structures in the hippocampus have produced a detailed, experimentally validated theory of the formation and processing of spatial memories. However, the specific characteristics of non-spatial memories, e.g. memories of odors and sounds, are still poorly understood. Recent results from human fMRI and EEG experiments suggest that dynamic functional connectivity (DFC) is important for the encoding and retrieval of memories \cite{demertzi2019human, khosla2018machine, fiecas2016modeling, ombao2018statistical, ting2018estimating, nielsen2016nonparametric}, yet DFC in local field potentials (LFP) in animal models has received relatively little attention. We here propose a novel latent factor Gaussian process (LFGP) model for DFC estimation and apply it to a data set of rat hippocampus LFP during a non-spatial memory task \cite{allen2016nonspatial}. The model produces strong statistical evidence for DFC and finds distinctive patterns of DFC associated with different experimental stimuli. 

Due to the high-dimensionality of time-varying covariance and the complex nature of cognitive processes, effective analysis of DFC requires balancing model parsimony, flexibility, and robustness to noise. DFC models fall into a common framework with three key elements: dimensionality reduction, covariance estimation from time series, and identification of connectivity patterns \cite{preti2017dynamic}. Many neuroimaging studies use a combination of various methods, such as sliding window (SW) estimation, principal component analysis (PCA), and the hidden Markov model (HMM) (see e.g. \cite{ombao2005slex, wang2016modeling, samdin2019detecting}). In general, these methods are not fully probabilistic, which can make uncertainty quantification and inference difficult in practice.

Bayesian latent factor models provide a probabilistic approach to modeling dynamic covariance that allows for simultaneous dimensionality reduction and covariance process estimation. Examples include the latent factor stochastic volatility (LFSV) model \cite{kastner2017efficient} and the nonparametric covariance model \cite{fox2015bayesian}. In the LFSV model, an autoregressive process is imposed on the latent factors and can be overly restrictive. While the nonparametric model is considerably more flexible, the matrix process for time-varying loadings adds substantial complexity.

Aiming to bridge the gap between these factor models, we propose the latent factor Gaussian process (LFGP) model.  In this approach, a latent factor structure is placed on the log-covariance process of a non-stationary multivariate time series, rather than on the observed time series itself as in other factor models. Since covariance matrices lie on the manifold of symmetric positive-definite (SPD) matrices, we utilize the Log-Euclidean metric to allow unconstrained modeling of the vectorized upper triangle of the covariance process.  Dimension reduction and model parsimony is achieved by representing each covariance element as a linear combination of Gaussian process latent factors \cite{lawrence2004gaussian}. 

In this work, we highlight three major advantages of the LFGP model for practical DFC analysis. First, through the prior on the Gaussian process length scale, we are able to incorporate scientific knowledge to target specific frequency ranges that are of scientific interest. Second, the model posterior allows us to perform Bayesian inference for scientific hypotheses, for instance, whether the LFP time series is non-stationary, and if characteristics of DFC differ across experimental conditions. Third, the latent factors serve as a low-dimensional representation of the covariance process, which facilitates visualization of complex phenomena of scientific interest, such as the role of DFC in stimuli discrimination in the context of a non-spatial memory experiment.

\section{Background}

\subsection{Sliding Window Covariance Estimation}

Sliding window methods have been extensively researched for the estimation and analysis of DFC, particularly in human fMRI studies; applications of these methods have identified significant associations of DFC with disease status, behavioral outcomes, and cognitive differences in humans. See \cite{preti2017dynamic} for a recent detailed review of existing literature. For $X(t) \sim \mathcal{N}(0, K(t))$ a $p$-variate time series of length $T$ with covariance process $K(t)$, the sliding window covariance estimate $\hat K_{SW}(t)$ with window length $L$ can be written as the convolution $\hat K_{SW}(t) = (h * XX')(t) = \sum_{s = 1}^T h(s) X(t - s)X(t - s)' \, ds,$
for the rectangular kernel $h(t) = \mathbf{1}_{[0, L - 1]}(t) / L$, where $\mathbf{1}$ is the indicator function.  Studies of the performance of sliding window estimates recommend the use of a tapered kernel to decrease the impact of outlying measurements and to improve the spectral properties of the estimate \cite{handwerker2012periodic, allen2014tracking, leonardi2015spurious}. In the present work we employ a Gaussian taper with scale $\tau$ defined as 
$h^{\tau}(t) = \frac{1}{\zeta}\exp\left\{-\frac{1}{2}\left(\frac{t - L / 2}{\tau L / 2}\right)^2\right\} \mathbf{1}_{[0, L - 1]}(t),$
where $\zeta$ is a normalizing constant. The corresponding tapered SW estimate is $\hat K_{\tau}(t) = (h^{\tau} * XX')(t)$.  

\subsection{Log-Euclidean Metric}

Direct modeling of the covariance process from the SW estimates is complicated by the positive definite constraint of the covariance matrices.  To ensure the model estimates are positive definite, it is necessary to employ post-hoc adjustments, or to build the constraints into the model, typically by utilizing the Cholesky or spectral decompositions. The LFGP model instead uses the Log-Euclidean framework of symmetric positive definite (SPD) matrices to naturally ensure positive-definiteness of the estimated covariance process while also simplifying the model formulation and implementation.

Denote the space of $p\times p$ SPD matrices as $\mathbb{P}_p$.  For $X_1, X_2 \in \mathbb{P}_p$, the \textit{Log-Euclidean} distance is defined by $d_{LE}(X_1, X_2) = \|\text{Log}(X_1) - \text{Log}(X_2)\|,$ where $\text{Log}$ is the matrix logarithm, and $\|\cdot\|$ is the Frobenius norm.  The metric space $(\mathbb{P}_p, d_{LE})$ is a Riemannian manifold that is isomorphic to $\mathbb{R}^q$ with the usual Euclidean norm, for $q = (p + 1)p / 2$.

Methods for modeling covariances in regression contexts via the matrix logarithm were first introduced in \cite{chiu1996matrix}.  The Log-Euclidean framework for analysis of SPD matrices in neuroimaging contexts was first proposed in \cite{arsigny2005fast}, with further applications in neuroimaging having been developed in recent years \cite{zhu2009intrinsic}. The present work is a novel application of the Log-Euclidean framework for DFC analysis.

\subsection{Bayesian Latent Factor Models}

For $x_{ij}, i = 1, \dots, n, j = 1, \dots, p$, the simple Bayesian latent factor model is $x_i = f_i \Lambda + \varepsilon_i,$ with $f_i \overset{iid}\sim \mathcal{N}(0, I_r), \varepsilon_i \overset{iid}\sim \mathcal{N}(0, \Sigma)$, and $\Lambda$ an $r \times p$ matrix of factor loadings \cite{aguilar1998bayesian}.  $\Sigma$ is commonly assumed to be a diagonal matrix, implying the latent factors capture all the correlation structure of the $p$ features of $x$.  The latent factor model shares some similarities with principal component analysis, but includes a stochastic error term, which leads to a different interpretation of the resulting factors \cite{ombao2005slex, wang2016modeling}.

Variants of the linear factor model have been developed for modeling non-stationary multivariate time series \cite{prado2010time, motta2012evolutionary}. In general, these models represent the $p$-variate observed time series as a linear combination of $r$ latent factors $f_j(t), j = 1, \dots, r$, with $r \times q$ loading matrix $\Lambda$ and errors $\varepsilon(t)$: $X(t) = f(t)\Lambda + \varepsilon(t)$. From this general modeling framework, numerous methods for capturing the non-stationary dynamics in the underlying time series have been developed, such as latent factor stochastic volatility (LFSV) \cite{kastner2017efficient}, dynamic conditional correlation \cite{lindquist2014evaluating}, and the nonparametric covariance model \cite{fox2015bayesian}. 



\subsection{Gaussian Processes}

A Gaussian process ($\mathcal{GP}$) is a continuous stochastic process for which any finite collection of points are jointly Gaussian with some specified mean and covariance. A $\mathcal{GP}$ can be understood as a distribution on functions belonging to a particular reproducing kernel Hilbert space (RKHS) determined by the covariance operator of the process \cite{van2008reproducing}. 
Typically, a zero mean $\mathcal{GP}$ is assumed (i.e. the functional data has been centered by subtracting a consistent estimator of the mean), so that the $\mathcal{GP}$ is parameterized entirely by the kernel function $\kappa$ that defines the pairwise covariance.  Let $f\sim\mathcal{GP}(0,k(\cdot,\cdot))$. Then for any $x$ and $x'$ we have
\begin{align}
\left(\begin{array}{c} f(x) \\ f(x') \end{array}\right)\sim\mathcal{N}\left(0, \left[\begin{array}{c c} \kappa(x,x) & \kappa(x,x') \\ \kappa(x,x') & \kappa(x',x') \end{array}\right]\right).
\end{align}
Further details are given in \cite{rasmussen2004gaussian}.

\section{Latent Factor Gaussian Process Model}

\subsection{Formulation}
We consider estimation of dynamic covariance from a sample of $n$ independent time series with $p$ variables and $T$ time points. Denote the $i$th observed $p$-variate time series by $X_i(t)$, $i = 1, \cdots, n$. We assume that each $X_i(t)$ follows an independent distribution $\mathcal{D}$ with zero mean and stochastic covariance process $K_i(t)$. To model the covariance process, we first compute the Gaussian tapered sliding window covariance estimates for each $X_i(t)$, with fixed window size $L$ and taper $\tau$ to obtain $\hat K_{\tau, i}$. We then apply the matrix logarithm to obtain the $q = p(p + 1) / 2$ length vector $Y_i(t)$ specified by $\hat K_{\tau, i} = \text{Log}(\vec{\mathbf{u}}(Y_i))$, where $\vec{\mathbf{u}}$ maps a matrix to its vectorized upper triangle. We refer to $Y_i(t)$ as the ``log-covariance" at time $t$.

The resulting $Y_i(t)$ can be modeled as an unconstrained $q$-variate time series.  The LFGP model represents $Y_i(t)$ as a linear combination of $r$ latent factors $F_i(t)$ through an $r\times q$ loading matrix $B$ and independent Gaussian errors $\epsilon_i$. The loading matrix $B$ is held constant across observations and time. Here $F_i(t)$ is modeled as a product of independent Gaussian processes. Placing priors $p_1, p_2, p_3$ on the loading matrix $B$, Gaussian noise variance $\sigma^2$, and Gaussian process hyper-parameter $\theta$, respectively, gives a fully probabilistic latent factor model on the covariance process:
\begin{align}
    X_i(t)&\sim \mathcal{D}(0, K_i(t)) \text{ where } K_i(t)=\exp{(\vec{\mathbf{u}}(Y_i(t)))}\\
    Y_i(t) &= F_i(t)\cdot B+\epsilon_i \text{ where } \epsilon_i \overset{iid}\sim \mathcal{N}(0, I\sigma^2)\\
    F_i(t) &\sim \mathcal{GP}(0,\kappa(t;\theta))\\
    B &\sim p_1, \sigma^2\sim p_2, \theta \sim p_3.
\end{align}
The LFGP model employs a latent distribution of curves $\mathcal{GP}(0,\kappa(t;\theta))$ to capture temporal dependence of the covariance process, thus inducing a Gaussian process on the log-covariance $Y(t)$. This conveniently allows multiple observations to be modeled as different realizations of the same induced $\mathcal{GP}$ as done in \cite{lan2017flexible}. The model posteriors are conditioned on different observations despite sharing the same kernel. For better identifiability, the $\mathcal{GP}$ variance scale is fixed so that the loading matrix can be unconstrained. 

\subsection{Properties}

\begin{theorem} The log-covariance process induced by the LFGP model is weakly stationary when the GP kernel $\kappa(s, t)$ depends only on $|s - t|$.
\end{theorem}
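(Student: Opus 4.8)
The plan is to verify directly the two defining properties of weak (second-order) stationarity for the induced vector-valued process $Y_i(t)$: that its mean function is constant in $t$, and that its matrix-valued autocovariance function $\Gamma(s,t) = \mathrm{Cov}(Y_i(s), Y_i(t))$ depends on $s,t$ only through the lag $|s-t|$. Throughout I would argue conditionally on the hyperparameters $B$, $\sigma^2$, and $\theta$, since stationarity is a property of the $\mathcal{GP}$ that these induce on $Y_i$; for each fixed value, the latent coordinates $F_{i,1}(t), \dots, F_{i,r}(t)$ are mutually independent zero-mean Gaussian processes, each with kernel $\kappa(\cdot,\cdot;\theta)$, and $\epsilon_i$ is independent Gaussian noise.

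First I would dispatch the mean. By linearity and the zero-mean assumptions on $F_i$ and $\epsilon_i$, we have $\mathbb{E}[Y_i(t)] = \mathbb{E}[F_i(t)]\,B + \mathbb{E}[\epsilon_i] = 0$ for every $t$, so the mean is constant and the first condition holds trivially.

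The substantive step is the autocovariance. Writing the $k$th coordinate as $Y_{i,k}(t) = \sum_{j=1}^r F_{i,j}(t) B_{jk} + \epsilon_{i,k}$ and expanding, the cross terms between factors and noise vanish by independence, leaving $\mathrm{Cov}(Y_{i,k}(s), Y_{i,l}(t)) = \sum_{j,j'} B_{jk} B_{j'l}\, \mathrm{Cov}(F_{i,j}(s), F_{i,j'}(t)) + \mathrm{Cov}(\epsilon_{i,k}, \epsilon_{i,l})$. Using mutual independence of the latent factors, $\mathrm{Cov}(F_{i,j}(s), F_{i,j'}(t)) = \delta_{jj'}\,\kappa(s,t;\theta)$, so the factor contribution collapses to $\kappa(s,t;\theta)\,(B^\top B)_{kl}$. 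In matrix form this yields $\Gamma(s,t) = \kappa(s,t;\theta)\, B^\top B + \sigma^2 I\,\mathbf{1}_{\{s=t\}}$, where the Gram matrix $B^\top B$ captures the contribution of the shared loadings and the noise term contributes only at zero lag.

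Finally, I would invoke the hypothesis that $\kappa(s,t;\theta) = \kappa(|s-t|;\theta)$. Then both $\kappa(s,t;\theta)\,B^\top B$ and the indicator $\mathbf{1}_{\{s=t\}} = \mathbf{1}_{\{|s-t|=0\}}$ are functions of the lag alone, so $\Gamma(s,t)$ depends on $(s,t)$ only through $|s-t|$; combined with the constant mean, this establishes weak stationarity. I do not anticipate a genuine obstacle here, as the computation is essentially bilinear algebra; the only points requiring care are the bookkeeping of the vector/matrix dimensions that produce $B^\top B$, and stating explicitly that the argument is conditional on $(B,\sigma^2,\theta)$ so that the induced process is well defined. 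As a remark, since $Y_i$ is conditionally Gaussian, this weak stationarity in fact upgrades to strict stationarity, though the theorem asserts only the weaker property.
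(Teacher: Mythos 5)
Your proposal is correct and follows essentially the same route as the paper: a direct bilinear expansion of $\mathrm{Cov}(Y_{ij}(s), Y_{ij'}(t))$ into a loading-weighted sum of factor kernels plus a noise term, followed by the observation that each piece depends on $(s,t)$ only through $|s-t|$. The only differences are cosmetic — you additionally verify the constant-mean condition and treat the noise contribution as a zero-lag indicator, while the paper allows a distinct kernel $\kappa(\cdot,\cdot;\theta_k)$ per factor rather than your single shared kernel collapsing to $B^\top B$ — none of which changes the argument.
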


\begin{proof}
The covariance of the log-covariance process $Y(t)$ depends only on the static loading matrix $B = \left(\beta_{kj}\right)_{1 \leq k \leq r; 1 \leq j \leq q}$ and the factor covariance kernels.  Explicitly, for factor kernels $\kappa(s, t; \theta_k), k = 1, \dots, r$, and assuming $\varepsilon_i(t) \overset{iid}\sim \mathcal{N}(0, \Sigma)$, with $\Sigma = (\sigma^2_{jj'})_{j,j' \leq q}$ constant across observations and time, the covariance of elements of $Y(t)$ is
\begin{align}
\text{Cov}(Y_{ij}(s), Y_{ij'}(t)) &= \text{Cov}\left(\sum_{k = 1}^r F_{ik}(s)\beta_{kj} + \varepsilon_{ij'}(t), \sum_{k = 1}^r F_{ik}(t)\beta_{kj'} + \varepsilon_{ij'}(t)\right)\\
&= \sum_{k = 1}^r \beta_{kj}\beta_{kj'}\kappa(s, t; \theta_k) + \sigma^2_{jj'},
\end{align}
which is weakly stationary when $\kappa(s, t)$ depends only on $|s - t|$.
\end{proof}

\textbf{\textit{Posterior contraction.}} To consider posterior contraction of the LFGP model, we make the following assumptions.  The true log-covariance process $w = \vec{\mathbf{u}}(\log(K(t))$ is in the support of the product $\mathcal{GP}$ $W \sim F(t) B$, for $F(t)$ and $B$ defined above, with known number of latent factors $r$. The $\mathcal{GP}$ kernel $\kappa$ is $\alpha$-H\"older continuous with $\alpha \geq 1/2$. $Y(t): [0, 1] \to \mathbb{R}^q$ is a smooth function in $\ell_q^{\infty}([0, 1])$ with respect to the Euclidean norm, 
and the prior $p_2$ for $\sigma^2$ has support on a given interval $[a, b] \subset (0, \infty)$.  
Under the above assumptions, bounds on the posterior contraction rates then follow from previous results on posterior contraction of Gaussian process regression for $\alpha$-smooth functions given in \cite{ghosal2007convergence, van2008rates}.  Specifically, 

\[E_0\Pi_n((w, \sigma): \|w - w_0\|_n + |\sigma - \sigma_0| > M\varepsilon_n | Y_1, \cdots, Y_n) \to 0\]

for sufficiently large $M$ and with posterior contraction rate $\varepsilon_n = n^{-\alpha / (2\alpha + q)}\log^{\delta}(n)$ for some $\delta > 0$, where $E_0(\Pi_n(\cdot | Y_1, \cdots, Y_n))$ is the expectation of the posterior under the model priors.

To illustrate posterior contraction in the LFGP model, we simulate data for five signals with various sample sizes ($n$) and numbers of observation time points ($t$), with a covariance process generated by two latent factors. To measure model bias, we consider the mean squared error of posterior median of the reconstructed log-covariance series. To measure posterior uncertainty, the posterior sample variance is used. As shown in Table \ref{table:contraction}, both sample size $n$ and number of observation time points $t$ contribute to posterior contraction.

\begin{table}[h!] \centering 
  \caption{Mean squared error of posterior median (posterior sample variance) $\times 10^{-2}$}
  \label{} 
\begin{tabular}{ccccc} 
\\[-1.8ex]\hline 
\hline \\[-1.8ex] 
 & $n=1$ & $n=10$ & $n=20$ & $n=50$ \\ 
\hline \\[-1.8ex] 
$t=25$ & 12.212 (20.225) & 7.845 (8.743) & 7.089 (7.714) & 5.869 (7.358) \\
$t=50$ & 6.911 (7.588) & 4.123 (5.836) & 3.273 (3.989) & 3.237 (3.709) \\
$t=100$ & 3.728 (5.218) & 1.682 (2.582) & 1.672 (2.659) & 1.672 (1.907) \\
\hline \\ [-1.8ex] 
\end{tabular}
\label{table:contraction}
\end{table} 

\textbf{\textit{Large prior support}.} The prior distribution of the log-covariance process $Y(t)$ is a linear combination of $r$ independent $\mathcal{GP}$s each with mean 0 and kernel $\kappa(s, t; \theta_k), k = 1, \cdots, r$.  That is, each log-covariance element will have prior $Y_j(t) = \sum_{k = 1}^r \beta_{jk} F_k(t) \sim \mathcal{GP}(0, \sum \beta^2_{jk}\kappa(s, t; \theta_k))$.  Considering $B$ fixed, the resulting prior for $F_i(t)B$ has support equal to the closure of the reproducing kernel Hilbert space (RKHS) with kernel $B^T\mathcal{K}(t, \cdot)B$ \cite{rasmussen2004gaussian}, where $\mathcal{K}$ is the covariance tensor formed by stacking  $\kappa_k = \kappa(s, t; \theta_k), k = 1, \cdots, r$ \cite{van2008reproducing}.  Accounting for the prior $p_1$ of $B$, a function $W \in \ell^{\infty}_q[0, 1]$ will have nonzero prior probability $\Pi_0(W) > 0$ if $W$ is in the closure of the RKHS with kernel $A^T\mathcal{K}(t, \cdot)A$ for some $A$ in the support of $p_1$.

\subsection{Factor Selection via the Horseshoe Prior}

Similar to other factor models, the number of latent factors in the LFGP model has a crucial effect on model performance, and must be selected somehow. For Bayesian factor analysis, there is extensive literature on factor selection methods, such as Bayes factors, reversible jump sampling \cite{lopes2004bayesian},
and shrinkage priors \cite{bhattacharya2011sparse}. While we can compare different models in terms of goodness-of-fit, we cannot compare their latent factors in a meaningful way due to identifiability issues. Therefore, we instead iteratively increase the number of factors and fit the new factors on the residuals resulting from the previous fit. In order to avoid overfitting with too many factors, we place a horseshoe prior on the loadings of the new factors, so that the loadings shrink to zero if the new factor is unnecessary.

\begin{figure}[h!]
  \centering
  \includegraphics[width=0.6\textwidth]{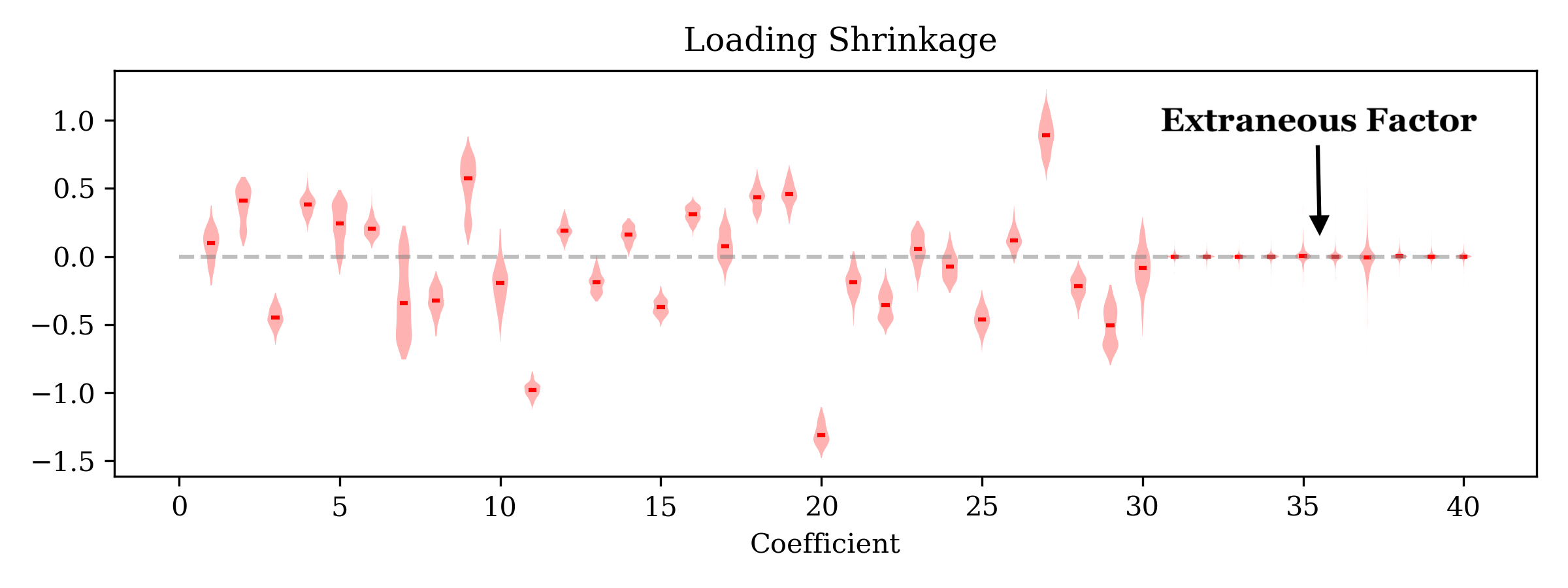}
  \includegraphics[width=0.3\textwidth]{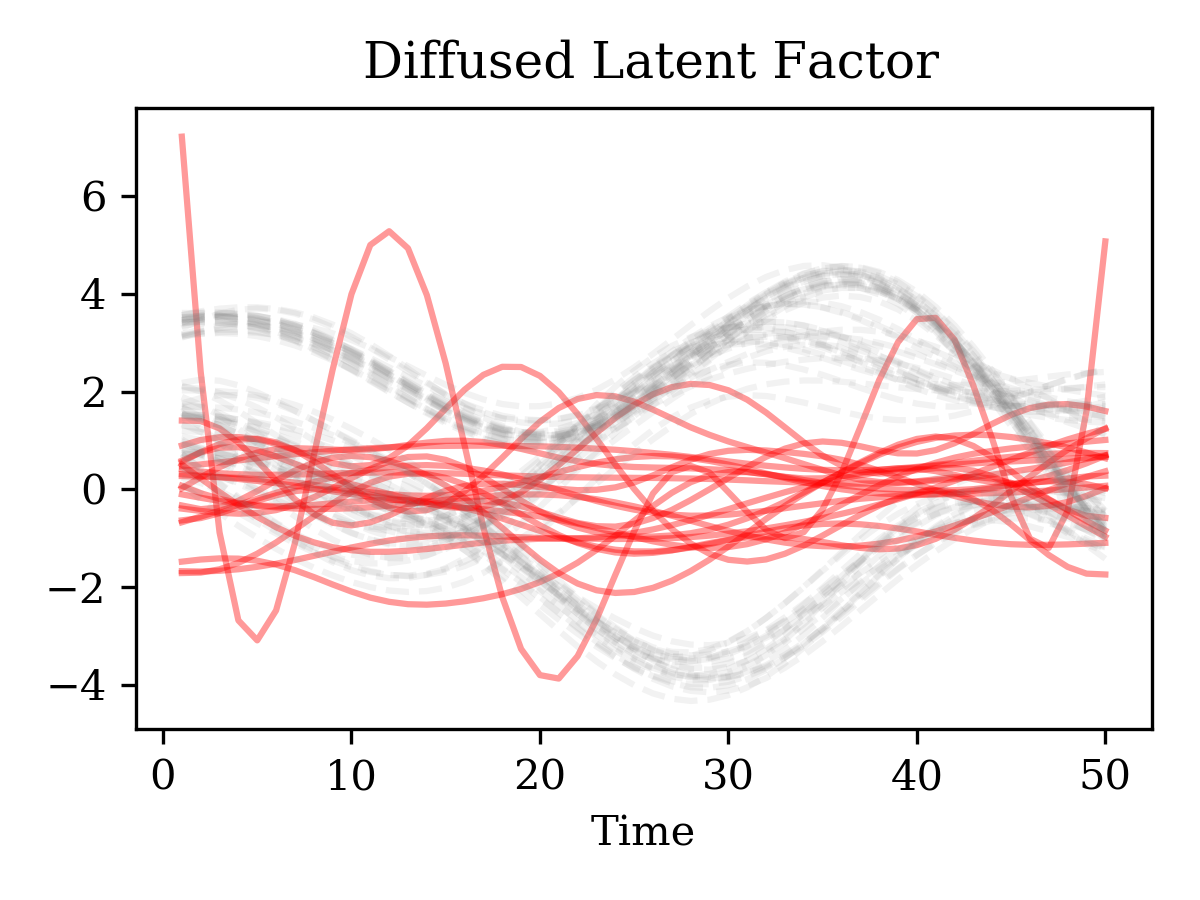}
  \caption{Violin plots of loading posteriors show that the loadings for the fourth factor (indices 30 to 39) shrink to zero with the horseshoe prior (left). Compared to the posteriors of the first three factors (dashed gray), the posterior of the extraneous factor (solid red) is diffused around zero as a result of zero loadings (right).}
  \label{fig:horseshoe}
\end{figure}

Introduced by \cite{carvalho2009handling}, the horseshoe prior in the regression setting is given by
\begin{align}
    \beta|\lambda,\tau&\sim N(0,\lambda^2\rho^2)\\
    \lambda&\sim Cauchy^+(0, 1)
\end{align}
and can be considered as a scale-mixture of Gaussian distributions. A small global scale $\rho$ encourages shrinkage, while the heavy tailed Cauchy distribution allows the loadings to escape from zero. 
The example shown in Figure \ref{fig:horseshoe} illustrates the shrinkage effect of the horseshoe prior when iteratively fitting an LFGP model with four factors to simulated data generated from three latent factors. For sampling from the loading posterior distribution, we use the No-U-Turn Sampler \cite{hoffman2014no} as implemented in PyStan \cite{carpenter2017stan}.

\subsection{Scalable Computation}

The LFGP model can be fit via Gibbs sampling, as commonly done for Bayesian latent variable models. In every iteration, we first sample $F|B,\sigma^2,\theta,Y$ from the conditional $p(F|Y)$ as $F,Y$ are jointly multivariate Gaussian where the covariance can be written in terms of $B,\sigma^2,\theta$. However, it is worth noting that this multivariate Gaussian has a large covariance matrix, which could be computationally expensive to invert. Given $F$, the parameters $B,\sigma^2$ and $\theta$ become conditionally independent. Using conjugate priors for Bayesian linear regression, the posterior $p(B,\sigma^2|F,Y)$ is directly available. For the $\mathcal{GP}$ parameter posterior $p(\theta|F)$, either Metropolis random walk or slice sampling \cite{neal2003slice} can be used within each Gibbs step because the parameter space is low dimensional.

For efficient $\mathcal{GP}$ posterior sampling, it is essential to exploit the structure of the covariance matrix. For each independent latent $\mathcal{GP}$ factor $F_j$, there are $n$ independent sets of observations at $t$ time points. Therefore, the $\mathcal{GP}$ covariance matrix $\Sigma_j$ has dimensions $nT\times nT$. 
To reduce the computational burden, we notice that the covariance $\Sigma_j$ can be decomposed using a Kronecker product $\Sigma_j=I_n\otimes K_{time}(t)$, where $K_{time}$ is the $T \times T$ temporal covariance. The cost to invert $\Sigma_j$ using this decomposition is $\mathcal{O}(T^3)$, which is a substantial reduction compared to the original cost $\mathcal{O}((nT)^3)$. For many choices of kernel, such as the squared-exponential or Mat\'ern kernel, $K_{time}(t)$ has a Toeplitz structure and can be approximated through interpolation \cite{wilson2015kernel}, further reducing the computational cost.
\begin{figure}[h!]
  \centering
  \begin{subfigure}[b]{0.24\textwidth}
    \centering
    \includegraphics[width=\textwidth]{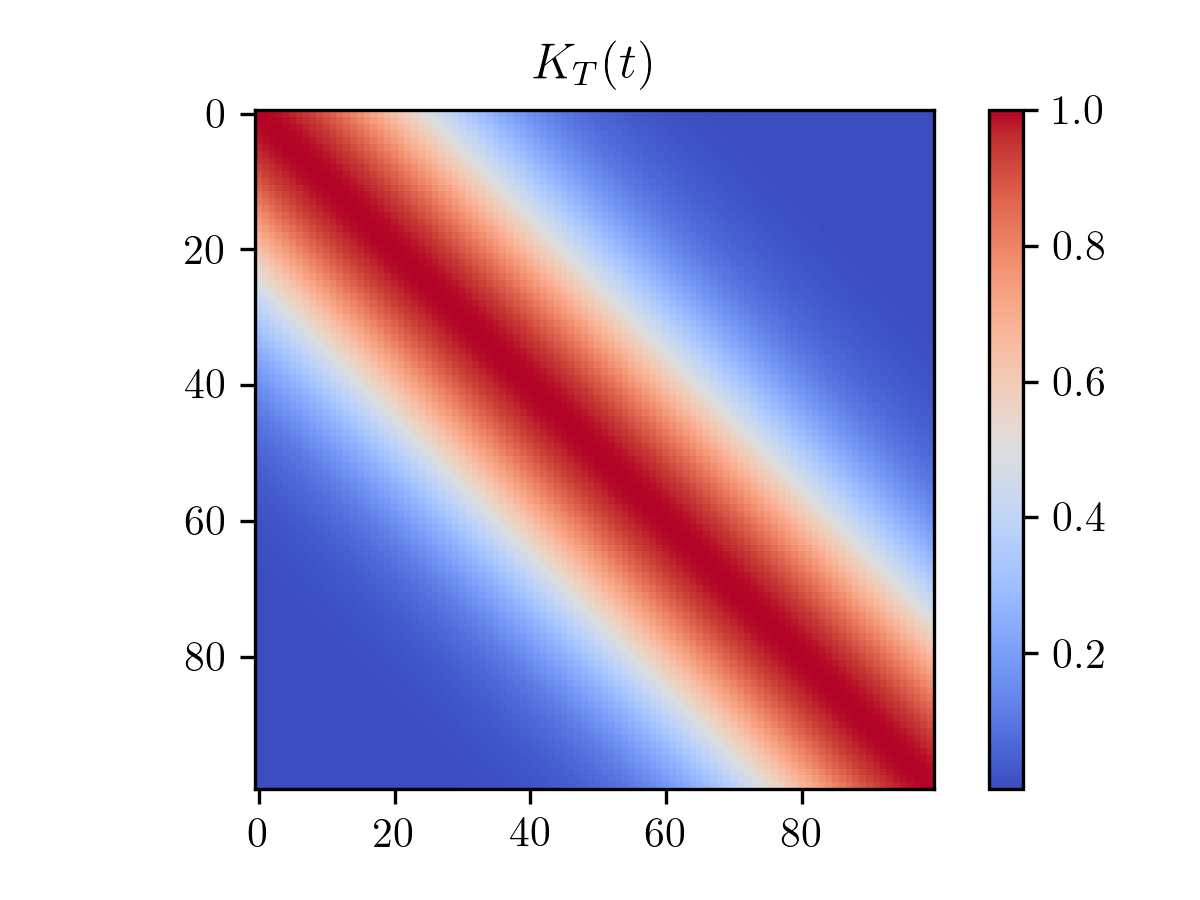}
    \caption{}
  \end{subfigure}
  \begin{subfigure}[b]{0.24\textwidth}
    \centering
  \includegraphics[width=\textwidth]{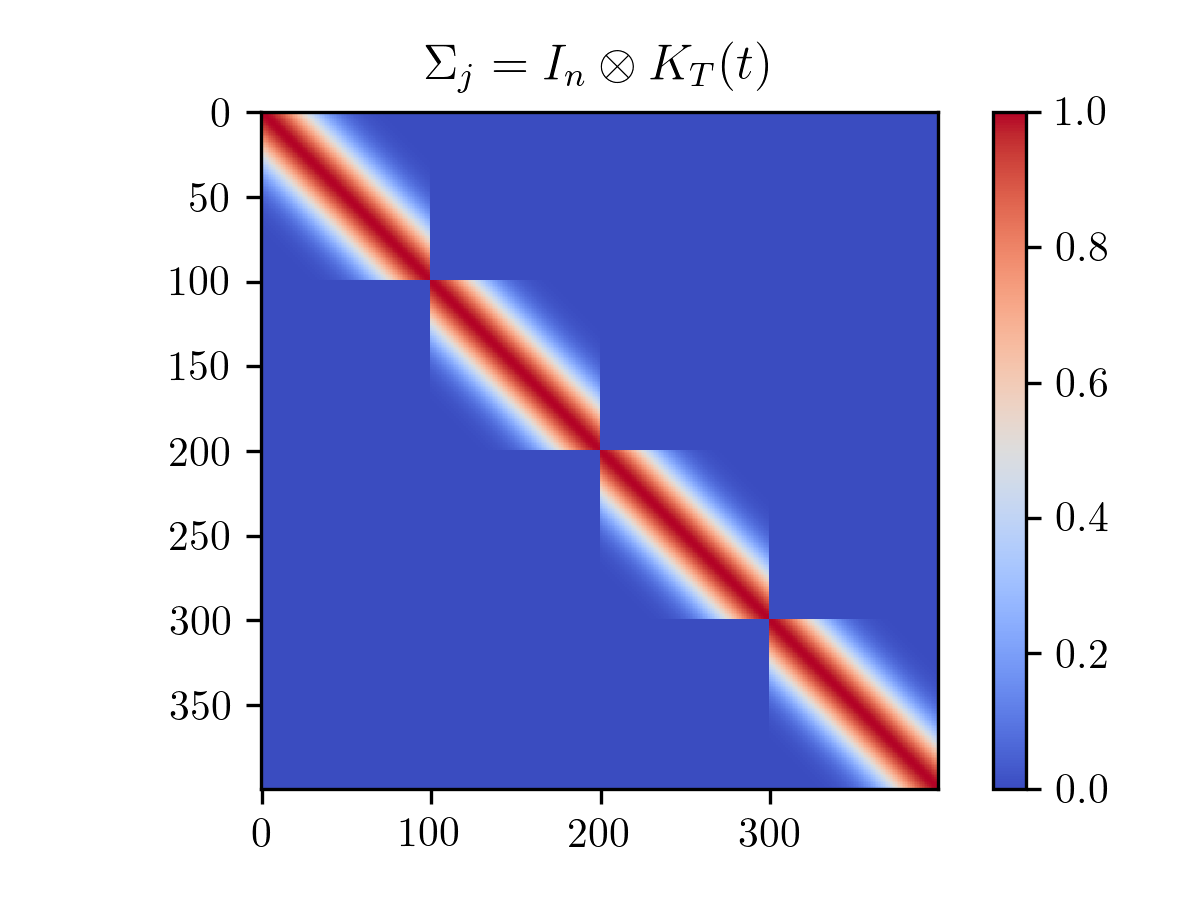}
  \caption{}
  \end{subfigure}
  \begin{subfigure}[b]{0.24\textwidth}
    \centering
  \includegraphics[width=\textwidth]{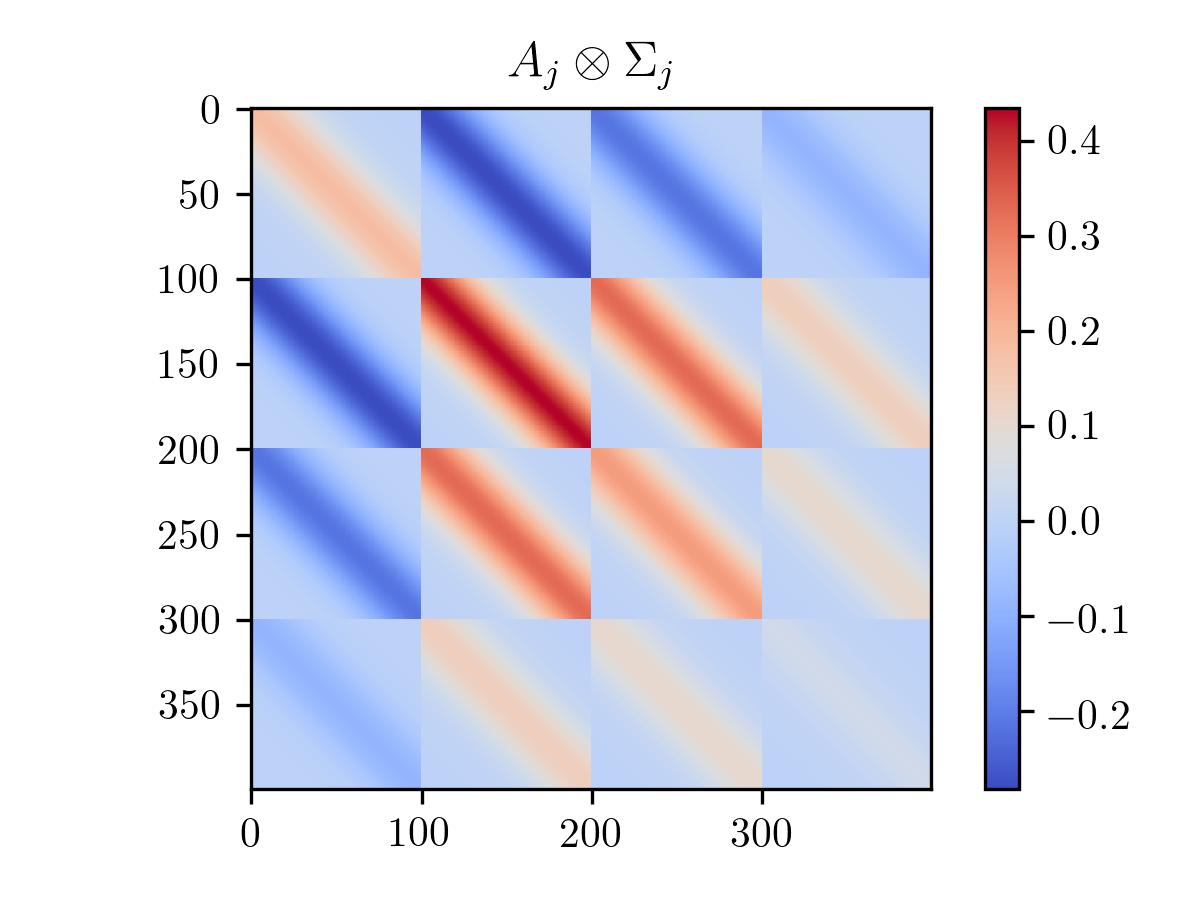}
  \caption{}
  \end{subfigure}
  \begin{subfigure}[b]{0.24\textwidth}
    \centering
  \includegraphics[width=\textwidth]{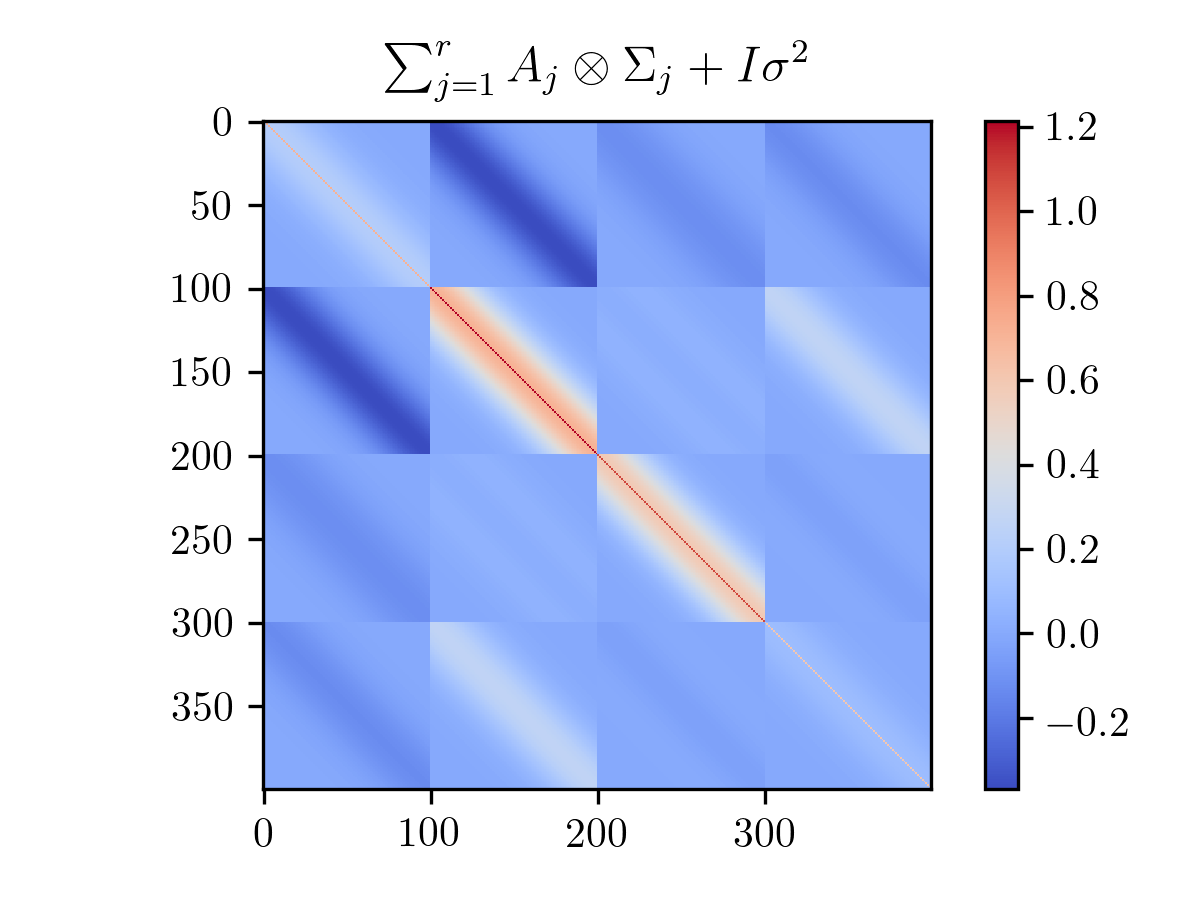}
  \caption{}
  \end{subfigure}
  \caption{The full covariance matrix $\Sigma_{Y}$ is composed of building blocks of smaller matrices. (a) $\mathcal{GP}$ covariance matrix at evenly-spaced time points, (b) covariance matrix of factor $F_j$ for $n$ sets of observations, (c) contribution to the covariance of $Y$ from factor $F_j$, and (d) full covariance matrix $\Sigma_{Y}$.}
  \label{fig:covariance}
\end{figure}

Combining the latent $\mathcal{GP}$ factors $F$ (dimensions $n\times T\times r$) and loading matrix $B$ (dimensions $r\times q$) induces a $\mathcal{GP}$ on $Y$. The dimensionality of $Y$ is $n\times T\times q$ so the full $(nTq)\times(nTq)$ covariance matrix $\Sigma_{Y}$ is prohibitive to invert. As every column of Y is a weighted sum of the $\mathcal{GP}$ factors, the covariance matrix $\Sigma_{Y}$ can be written as a sum of Kronecker products
$\sum_{j=1}^r A_j\otimes\Sigma_j+I\sigma^2$, where $\Sigma_j$ is the covariance matrix of the $j$th latent $\mathcal{GP}$ factor and $A_j$ is a $q\times q$ matrix based on the factor loadings. We can regress residuals of $Y$ on each column of $F$ iteratively to sample from the conditional distribution $p(F|Y)$ so that the residual covariance is only $A_j\otimes\Sigma_j+I$. The inversion can be done in a computationally efficient way with the following matrix identity
\begin{align}
(C\otimes D +I)^{-1}=(P\otimes Q)^T(I+\Lambda_1\otimes\Lambda_2)^{-1}(P\otimes Q)    
\end{align}
where $C=P\Lambda_1P^T$ and $D=Q\Lambda_2Q^T$ are the spectral decompositions. In the identity, obtaining $P,Q,\Lambda_1,\Lambda_2$ costs $\mathcal{O}(q^3)$ and $\mathcal{O}((nT)^3)$, which is a substantial reduction from the cost of direct inversion, $\mathcal{O}((nTq)^3)$; calculating $(I+\Lambda_1\otimes\Lambda_2)^{-1}$ is straightforward since $\Lambda_1$ and $\Lambda_2$ are diagonal.

\section{Experiments}

\subsection{Model Comparisons on Simulated Data}

We here consider three benchmark models: sliding window with principal component analysis (SW-PCA), hidden Markov model, and LFSV model. SW-PCA and HMM are commonly used in DFC studies but have severe limitations. The sliding window covariance estimates are consistent but noisy, and PCA does not take the estimation error into account. HMM is a probabilistic model and can be used in conjunction with a time series model, but it is not well-suited to capturing smoothly varying dynamics in brain connectivity. 

\begin{figure}[h!]
  \centering
  \includegraphics[width=0.25\textwidth]{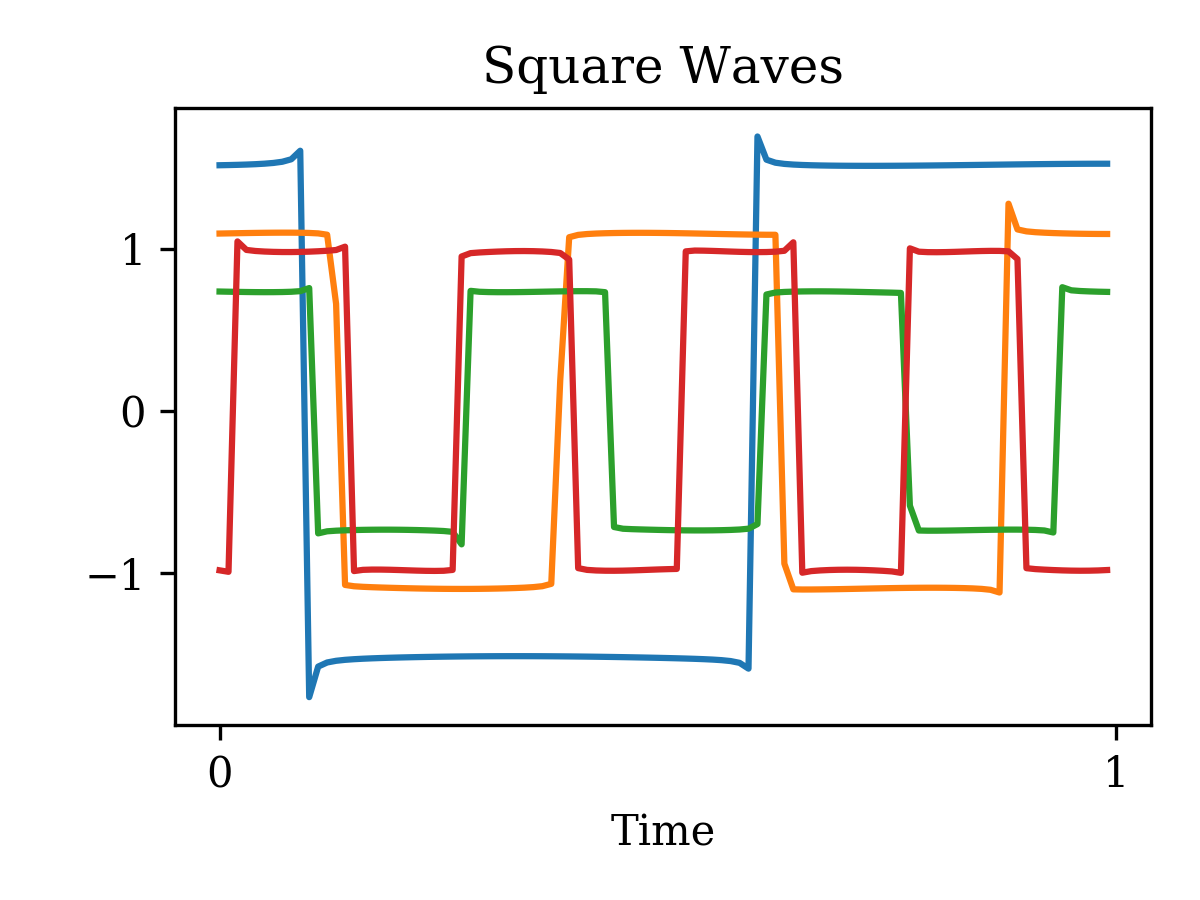}
  \includegraphics[width=0.25\textwidth]{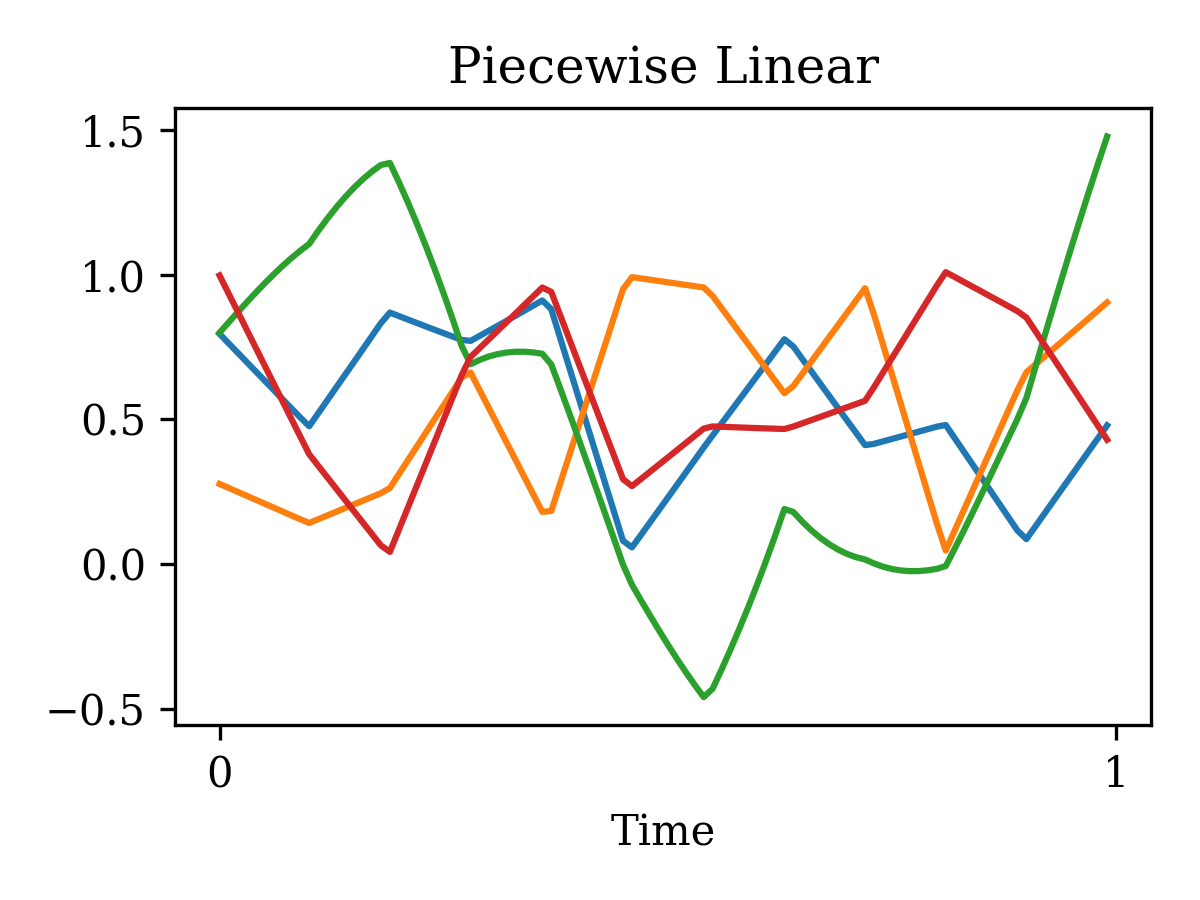}
  \includegraphics[width=0.25\textwidth]{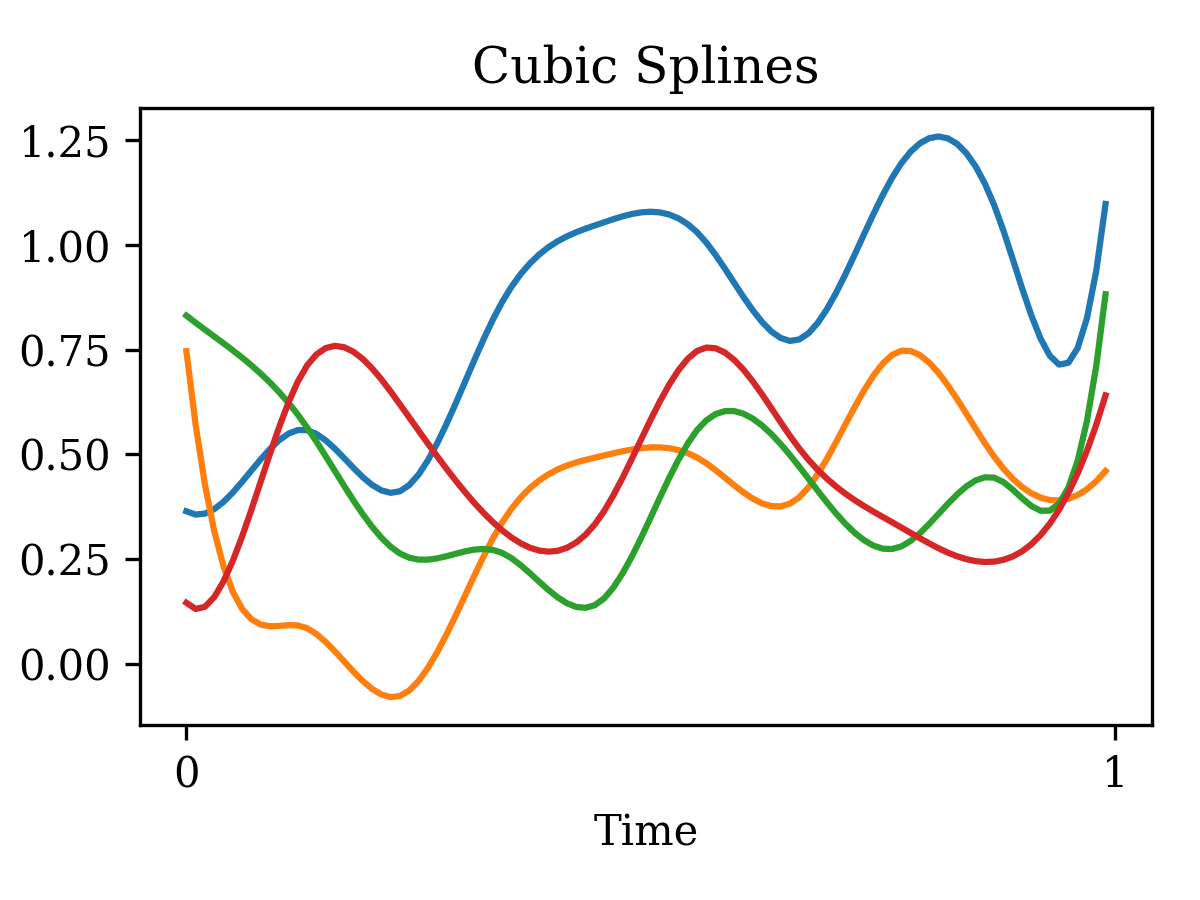}
  \\
  \includegraphics[width=0.25\textwidth]{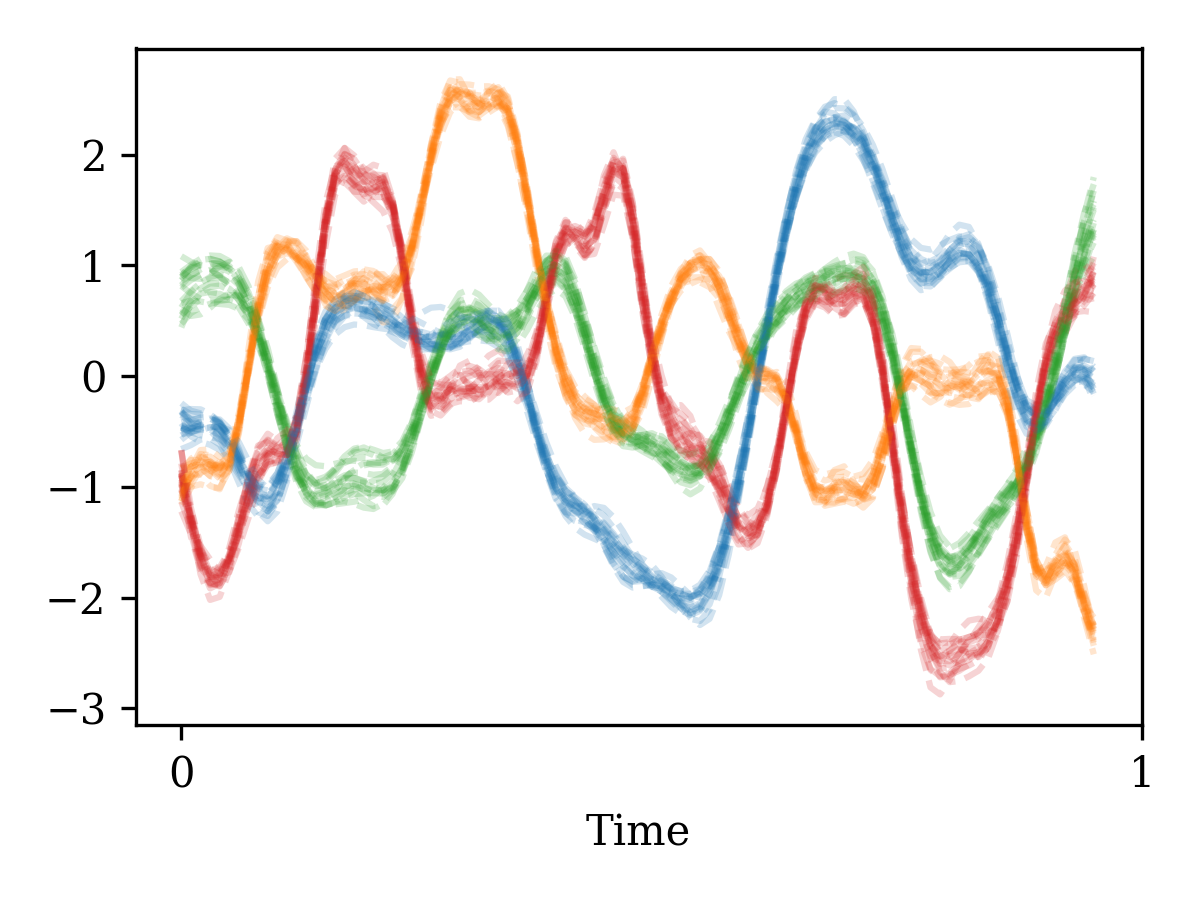}
  \includegraphics[width=0.25\textwidth]{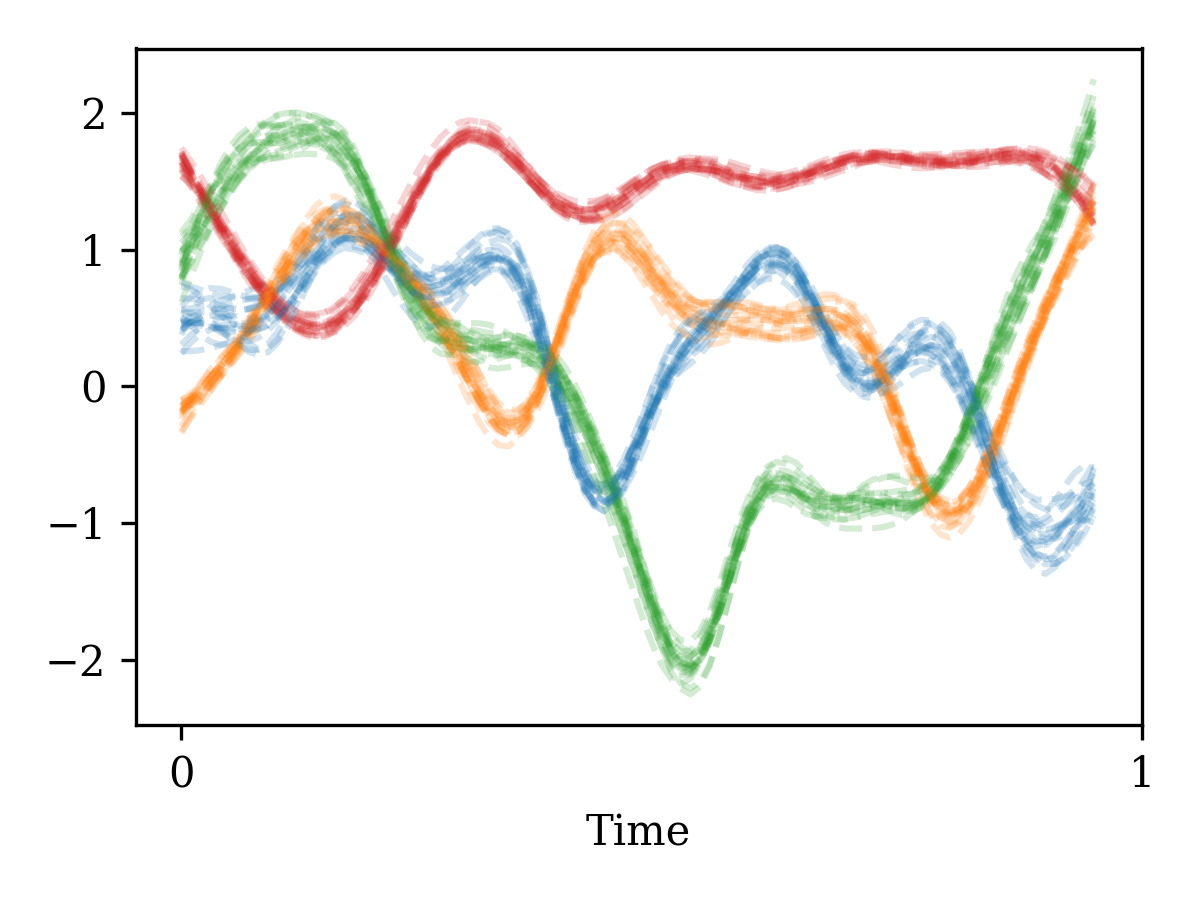}
  \includegraphics[width=0.25\textwidth]{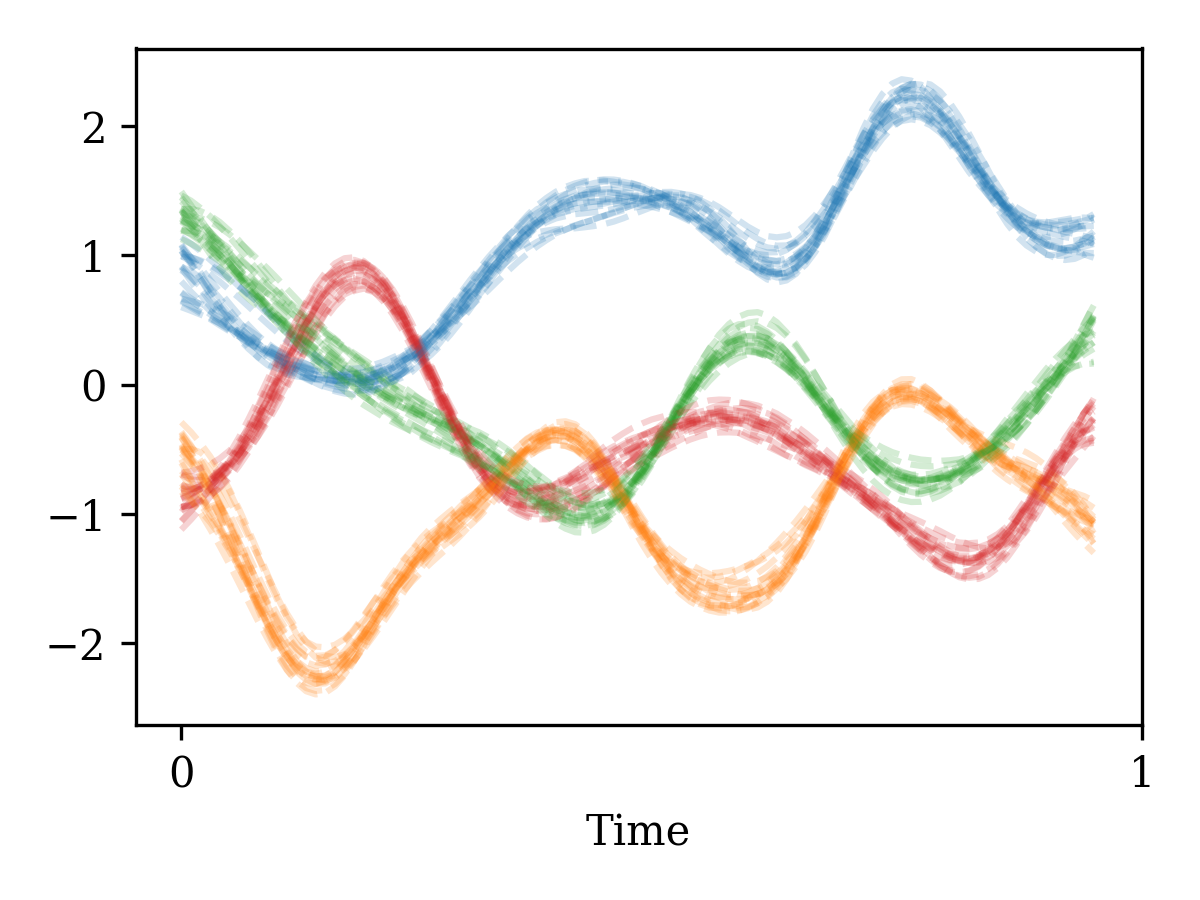}
  \caption{With the jagged dynamics of discrete states, the LFGP model fails to capture the ``jumps" but approximates the overall trend (left). When the underlying dynamics are smooth, the LFGP model can accurately recover the shape up to some scaling constant (right).}
  \label{fig:simulation}
\end{figure}

To compare the performance of different models, we simulate time series data $X_t\sim N(0, K(t))$ with time-varying covariance $K(t)$. The covariance $K(t)$ follows deterministic dynamics that are given by $\vec{\mathbf{u}}(\log(K(t)))=U(t)\cdot A$. We consider three different scenarios of dynamics $U(t)$: square waves, piece-wise linear functions, and cubic splines. Note that both square waves and piece-wise linear functions give rise to dynamics that are not well-represented by the LFGP model when the squared-exponential kernel is used. For each scenario, we randomly generate 100 time series data sets and fit all the models. The evaluation metric is reconstruction loss of the covariance as measured by the Log-Euclidean distance. The simulation results in Table \ref{table:simulation} show that the proposed LFGP model has the lowest reconstruction loss among the methods considered.
\begin{table}[h!] \centering 
  \caption{Median reconstruction loss (standard deviation) across 100 data sets} 
  \label{} 
\begin{tabular}{ccccc} 
\\[-1.8ex]\hline 
\hline \\[-1.8ex] 
 & SW-PCA & HMM & LFSV & LFGP \\ 
\hline \\[-1.8ex] 
Square save & 0.693 (0.499) & 1.003 (1.299) & 4.458 (2.416) & 0.380 (0.420)\\
Piece-wise & 0.034 (0.093) & 0.130 (0.124) & 0.660 (0.890) & 0.027 (0.088) \\
Smooth spline & 0.037 (0.016) & 0.137 (0.113) & 0.532 (0.400) & 0.028 (0.123) \\
\hline \\ [-1.8ex] 
\end{tabular}
\label{table:simulation}
\end{table} 
Each time series has 10 variables with 1000 observations and the latent dynamics are 4-dimensional as illustrated in Figure \ref{fig:simulation}. For the SW-PCA model, the sliding window size is 50 and the number of principal components is 4. For the HMM, the number of hidden states is increased gradually until the model does not converge, following the implementation outlined in \cite{bilmes1998gentle}. For the LFSV model, the R package \textit{factorstochvol} is used with default settings. All simulations are run on a 2.7 GHz Intel Core i5 Macbook Pro laptop with 8GB memory. 

\subsection{Application to Rat Hippocampus Local Field Potentials}

To investigate the neural mechanisms underlying the temporal organization of memories, \cite{allen2016nonspatial} recorded neural activity in the CA1 region of the hippocampus as rats performed a sequence memory task. The task involves the presentation of repeated sequences of 5 stimuli (odors A, B, C, D, and E) at a single port and requires animals to correctly identify each stimulus as being presented either ``in sequence” (e.g., ABC...) or ``out of sequence” (e.g., ABD...) to receive a reward. Here the model is applied to local field potential (LFP) activity recorded from the rat hippocampus, but the key reason for choosing this data set is that it provides a rare opportunity to subsequently apply the model to other forms of neural activity data collected using the same task (including spiking activity from different regions in rats \cite{holbrook2017bayesian} and whole-brain fMRI in humans). 

LFP signals were recorded in the hippocampi of five rats performing the task. The local field potentials are measured by surgically implanted tetrodes and the exact tetrode locations vary across rats. Therefore, it may not make sense to compare LFP channels of different rats.
This issue actually motivates the latent factor approach because we want to eventually visualize and compare the latent trajectories for all the rats. For the present analysis, we have focused on the data from a particular rat exhibiting 
the best memory task performance. To boost the signal-to-noise ratio, six LFP channels that recorded a majority of the attached neurons were chosen. Only trials of odors B and C were considered, to avoid potential confounders with odor A being the first odor presented, and due to substantially fewer trials for odors D and E.

\begin{figure}[h!]
  \centering
  \includegraphics[width=0.45\textwidth]{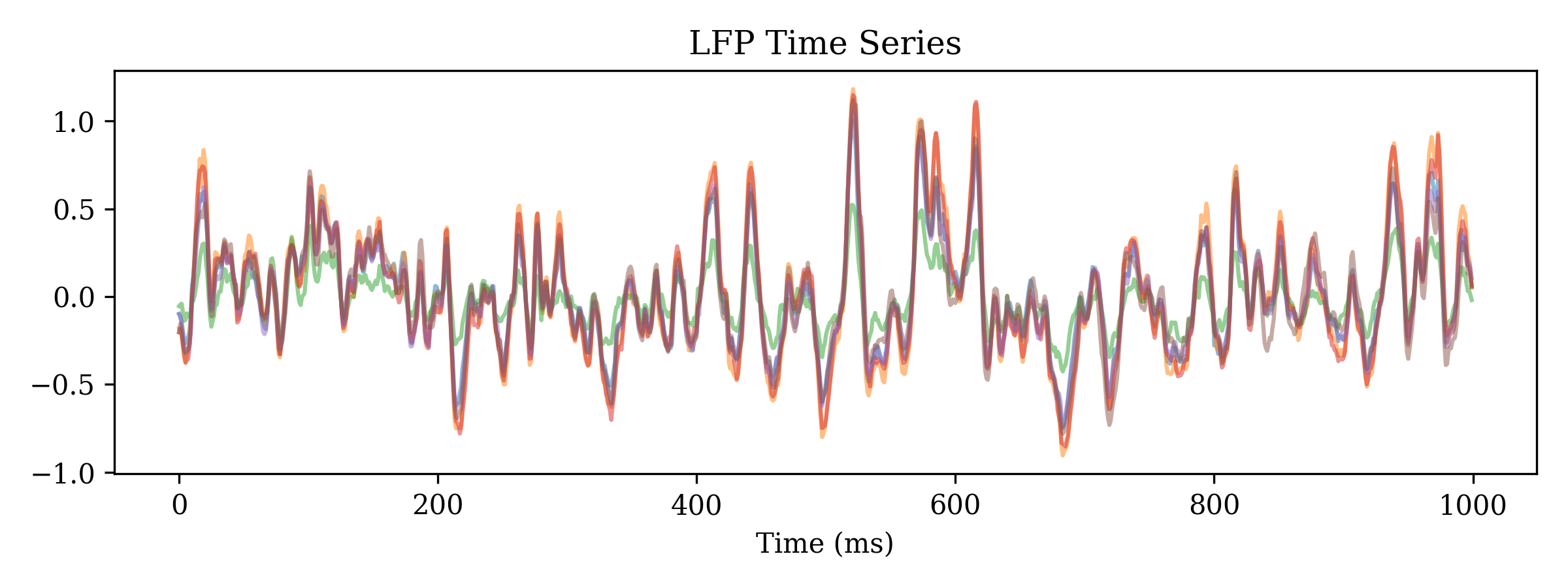}
  \includegraphics[width=0.45\textwidth]{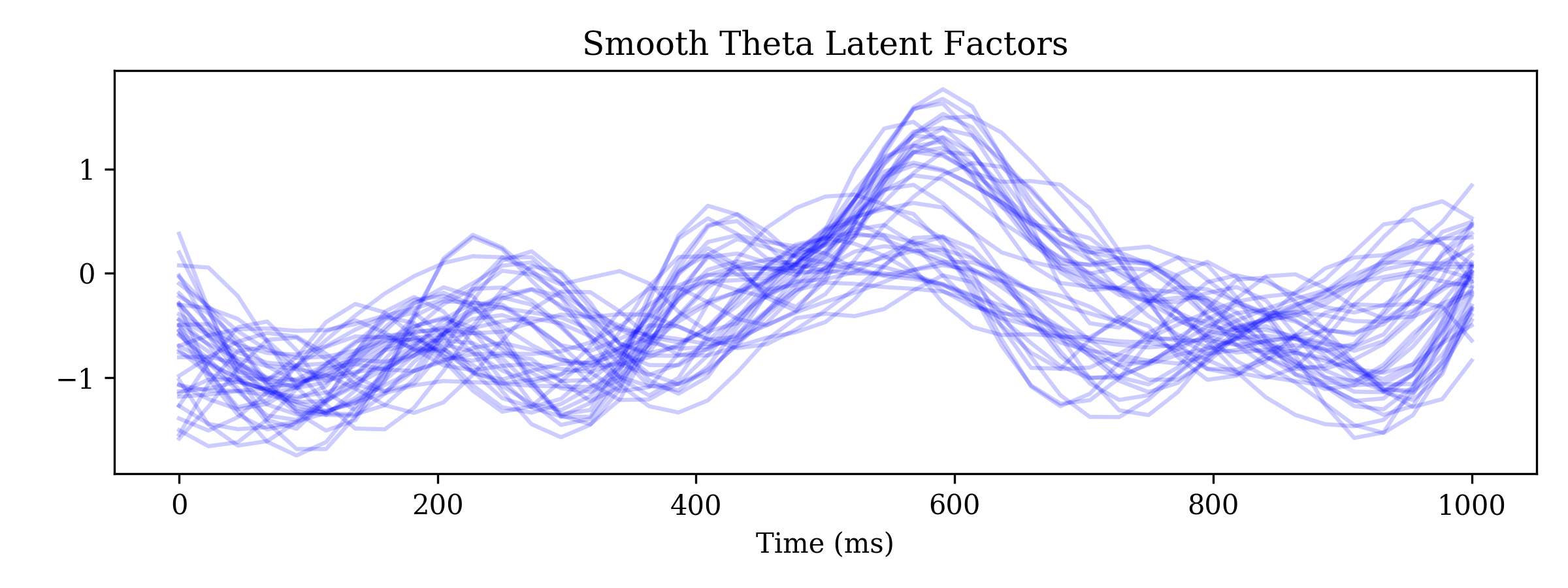}
  \caption{Time series of 6 LFP channels for a single trial sampled at 1000Hz include all frequency components (left). Posterior draws of latent factors for the covariance process appear to be smoothly varying near the theta frequency range (right).}
  \label{fig:lfp}
\end{figure}

During each trial, the LFP signals are sampled at 1000Hz for one second after odor release. We focus on 41 trials of odor B and 37 trials of odor C. 
Figure \ref{fig:lfp} shows the time series of these six LFP channels for a single trial. We treat all 78 trials as different realizations of the same stochastic process without distinguishing the stimuli explicitly in the model. In order to facilitate interpretation of the latent space representation, we fit two latent factors which explain about 40\% of the variance in the data. The prior for $\mathcal{GP}$ length scale is a Gamma distribution concentrated around 100ms on the time scale to encourage learning frequency dynamics close to the theta range (4-12 Hz). Notably, oscillations in this frequency range have been associated with memory function but have not previously been shown to differentiate among the type of stimuli used here, thus providing an opportunity to test the sensitivity of the model. For the loadings and variances, we use the Gaussian-Inverse Gamma conjugate priors. 20,000 MCMC draws are taken with the first 5000 draws discarded as burn-in.  


\begin{figure}[h!]
  \centering
  \includegraphics[width=0.45\textwidth]{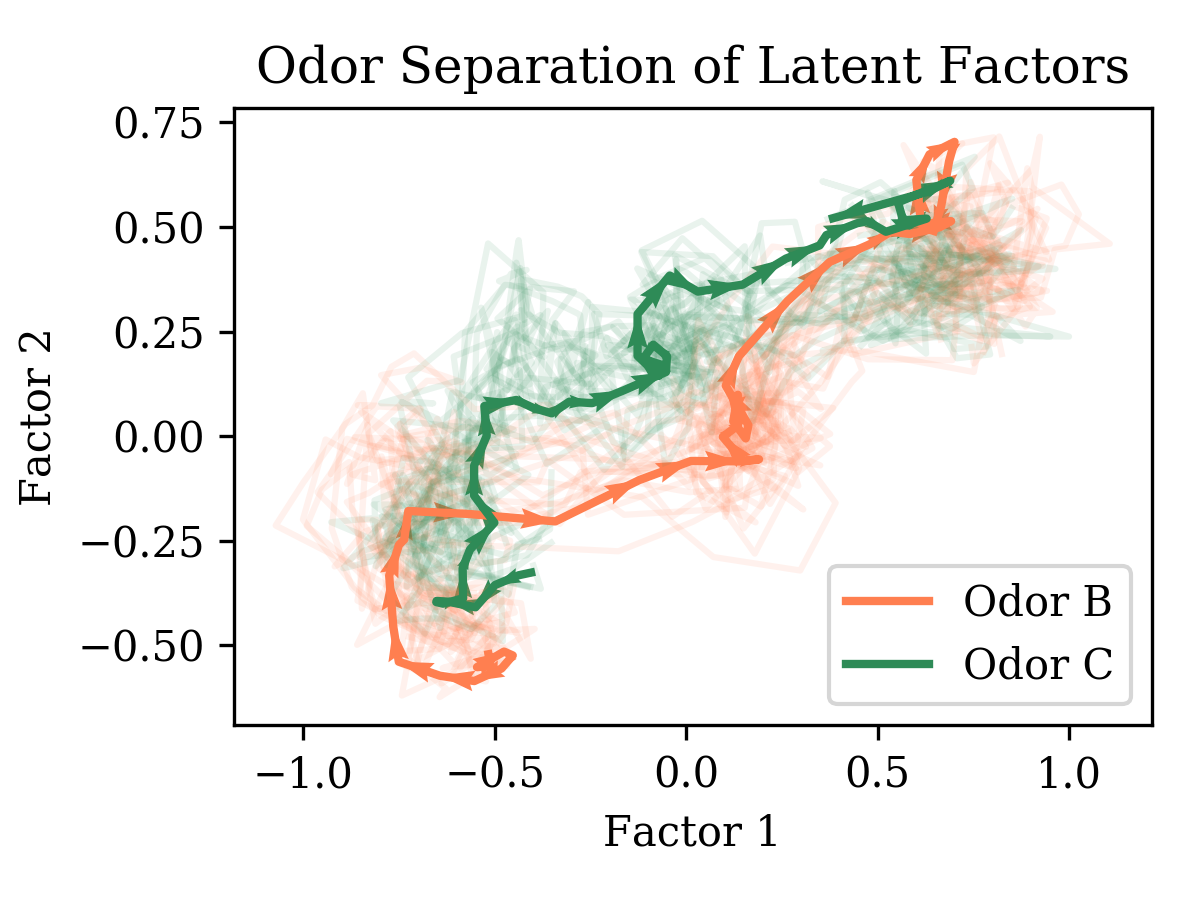}
  \includegraphics[width=0.45\textwidth]{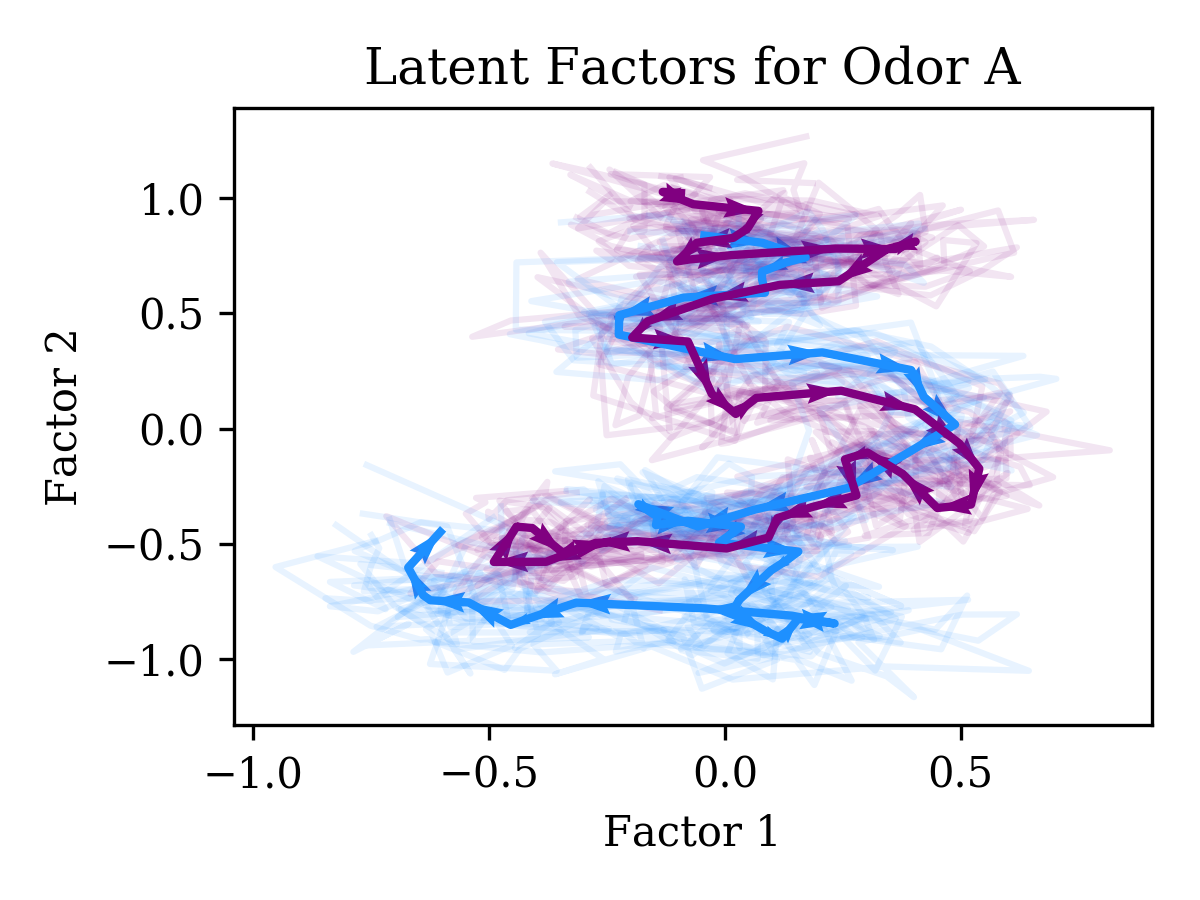}
  \caption{Posterior draws of median $\mathcal{GP}$ factors visualized as trajectories in latent space can be separated based on the odor, with maximum separation around 250ms (left). The latent trajectories are much more intertwined when the model is fitted to data of the same odor. (right)}
  \label{fig:odor}
\end{figure}

For each odor, we can calculate the posterior median latent factors across trials and visualize them as a trajectory in the latent space. Figure \ref{fig:odor} shows that the two trajectories start in an almost overlapping area, with separation occurring around 250ms. This is corroborated by the experimental data indicating that animals begin to identify the odor 200-250ms after onset. We also observe that the two trajectories converge toward the end of the odor presentation. This is also consistent with the experimental data showing that, by then, animals have correctly identified the odors and are simply waiting to perform the response (thereby resulting in similar neural states). 
In order to quantify odor separation, we can evaluate the difference between the posterior distributions of odor median latent trajectories by using classifiers on the MCMC draws. We also fit the model to two random subsets of the 58 trials of odor A and train the same classifiers. Table \ref{table:separation}) shows the classification results and the posteriors are more separated for different odors. 

\begin{table}[h!] \centering 
\caption{Odor separation as measured by Latent space classification accuracy (standard deviation)} 
\begin{tabular}{ccc} 
\\[-1.8ex]\hline 
\hline \\[-1.8ex] 
 & Different odors & Same odor \\ 
\hline \\[-1.8ex] 
Logistic regression & 69.97 (0.78) & 63.10 (0.91) \\
k-NN & 87.12 (0.33) & 78.41 (0.65) \\
SVM & 74.53 (0.67) & 64.75 (1.21) \\
\hline \\ [-1.8ex] 
\end{tabular}
\label{table:separation}
\end{table} 

As a comparison, a hidden Markov model was fit to the LFP data from the same six selected tetrodes. Figure \ref{fig:est_cov} compares the estimated covariance with different models.  Eight states were selected with an elbow method using the AIC of the HMM; we note that the minimum AIC is not achieved for less than 50 states, suggesting that the dynamics of the LFP covariance may be better described with a continuous model.  Moreover, the proportion of time spent in each state for odor B and C trials given in Table \ref{table:hmm_states} fails to capture odor separation in the LFP data.

\begin{table}[h]
\centering
\caption{State proportions for odors B and C as estimated by HMM}
\begin{tabular}{lrrrrrrrr}
\\[-1.8ex]\hline 
\hline \\[-1.8ex] 
Odor & State 1 & State 2 & State 3 & State 4 & State 5 & State 6 & State 7 & State 8\\
\hline \\[-1.8ex] 
B & 0.123 & 0.089 & 0.146 & 0.153 & 0.109 & 0.159 & 0.160 & 0.061\\
C & 0.133 & 0.092 & 0.144 & 0.147 & 0.106 & 0.164 & 0.152 & 0.062\\
\hline \\ [-1.8ex] 
\end{tabular}
\label{table:hmm_states}
\end{table}

Collectively, these results provide compelling evidence that this model can use LFP activity to differentiate the representation of different stimuli, as well as capture their expected dynamics within trials. Stimuli differentiation has frequently been accomplished by analyzing spiking activity, but not LFP activity alone. This approach, which may be applicable to other types of neural data including spiking activity and fMRI activity, may significantly advance our ability to understand how information is represented among brain regions. 

\begin{figure}[h!]
\centering
\includegraphics[width=0.9\textwidth]{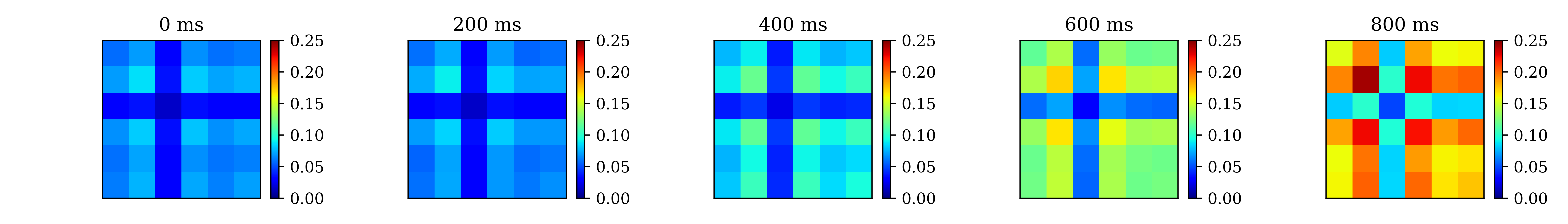}
\includegraphics[width=0.9\textwidth]{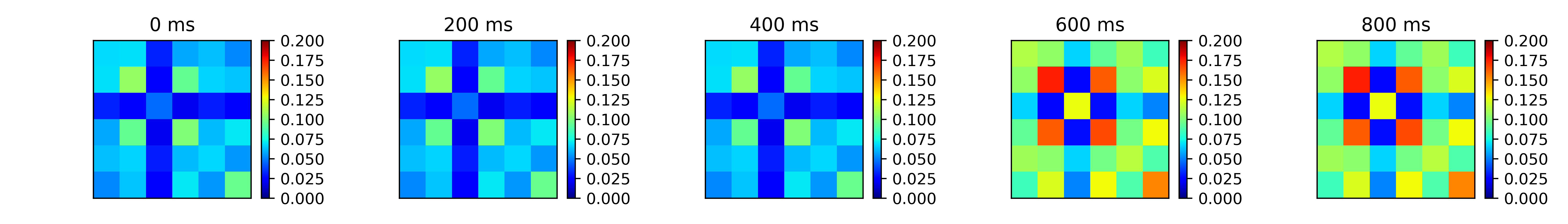}
\includegraphics[width=0.9\textwidth]{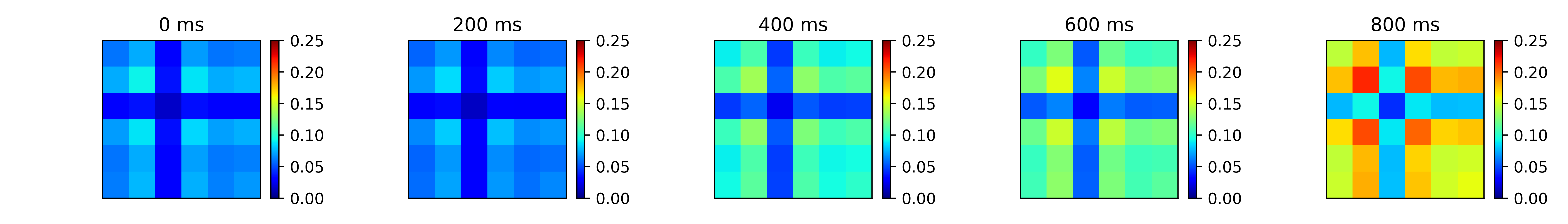}
\caption{Median covariance matrices over time for odor B trials estimated with sliding window (top), HMM (middle), and LFGP model (bottom) reveal similar patterns in dynamic connectivity in the six LFP channels.}
\label{fig:est_cov}
\end{figure}

\section{Discussion}

The proposed LFGP model is a novel application of latent factor models for directly modeling the dynamic covariance in multivariate non-stationary time series.  As a fully probabilistic approach, the model naturally allows for inference regarding the presence of DFC, and for detecting differences in connectivity across experimental conditions.  Moreover, the latent factor structure enables visualization and scientific interpretation of connectivity patterns.  Currently, the main limitation of the model is scalability with respect to the number of observed signals. Thus, in practical applications it may be necessary to select a relevant subset of the observed signals, or apply some form of clustering of similar signals.  Future work will consider simultaneously reducing the dimension of the signals and modeling the covariance process to improve the scalability and performance of the LFGP model.



The Gaussian process regression framework is a new avenue for analysis of DFC in many neuroimaging modalities. Within this framework, it is possible to incorporate other covariates in the kernel function to naturally account for between-subject variability. In our setting, multiple trials are treated as independent observations or repeated measurements from the same rat, while in human neuroimaging studies, there are often single observations from many subjects.  Pooling information across subjects in this setting could yield more efficient inference and lead to more generalizable results. 

\subsubsection*{Acknowledgments}

This work was supported by NIH award R01-MH115697 (B.S., H.O., N.J.F), NSF award DMS-1622490  (B.S.),  Whitehall  Foundation  Award  2010-05-84  (N.J.F.),  NSF  CAREER  award  IOS-1150292 (N.J.F.), NSF award BSC-1439267 (N.J.F.), and KAUST research fund (H.O.).  We would like to thank Michele Guindani (UC-Irvine), Weining Shen (UC-Irvine), and Moo Chung (Univ. of Wisconsin) for their helpful comments regarding this work.

\bibliographystyle{unsrt}  
\bibliography{references}

\end{document}